\newcommand{\eps}{\varepsilon}
\newcommand{\M}{\mathcal{M}}
\newcommand{\R}{\mathbb{R}}
\newcommand{\pbad}{fp_\mathrm{bad}}
\newcommand{\pgood}{fp_\mathrm{good}}
\newcommand{\neigh}{G^\delta_{V+}}
\newcommand{\neighp}{G^\delta_{V'+}}
\newtheorem{theorem}{Theorem}
\newtheorem{defn}{Definition}
\newtheorem{lemma}{Lemma}
\DeclareMathOperator{\Rat}{Rat}
\DeclareMathOperator{\Summ}{Sum}
\DeclareMathOperator{\Prodd}{Prod}
\begin{document}

\twocolumn[

\aistatstitle{The Base Measure Problem and its Solution}

\aistatsauthor{ Alexey Radul \And Boris Alexeev }

\aistatsaddress{ Google Reseach \And Google Research } ]

\begin{abstract}
  Probabilistic programming systems generally compute with probability
  density functions, leaving the base measure of each such function
  implicit.  This mostly works, but creates problems when densities
  with respect to different base measures are accidentally combined or
  compared.  Mistakes also happen when computing volume corrections
  for continuous changes of variables, which in general depend on the
  support measure.  We motivate and clarify the problem in the context
  of a composable library of probability distributions and bijective
  transformations.  We solve the problem by standardizing on
  Hausdorff measure as a base, and deriving formulas for comparing and
  combining mixed-dimension densities, as well as updating densities
  with respect to Hausdorff measure under diffeomorphic
  transformations.  We also propose a software architecture that
  implements these formulas efficiently in the
  common case.  We hope that by adopting our solution, probabilistic
  programming systems can become more robust and general, and make a
  broader class of models accessible to practitioners.
\end{abstract}


\section{Introduction}

Suppose we are designing a composable library of software to represent
probability distributions and transformations thereof which,
following \cite{tfp-repo}, we will call ``bijectors''.  TensorFlow
Probability \citep{tfp-repo} and PyTorch Distributions \citep{paszke2019pytorch}
are such libraries in their own right, targeting the probabilistic
machine learning space.  General-purpose probabilistic programming
languages such as Stan~\citep{carpenter2017stan},
BLOG~\citep{li2013blog}, Anglican~\citep{toplin-et-al-anglican}, or Venture~\citep{mansinghka2014venture} must
of necessity also include such libraries, to implement their primitive
distributions and deterministic functions.

Suppose furthermore that we are designing this library to operate on
explicitly vector-valued probability distributions.\footnote{No
  distinction need be made for our purposes between vectors, matrices,
  tuples, or other structures, as long as joint distributions over
  non-trivial powers of $\R$ are in scope.}  This is the case for
TensorFlow Probability and PyTorch Distributions, for instance, to
take advantage of vectorized hardware; and studying the
general case is instructive even for a scalar design.

In this setting, it's conventional to represent a probability
distribution $P$ as a function $p: \R^n \to \R$ computing the
probability density of $P$ at points in $\R^n$, together with a
sampler $x \sim s()$ drawing random variates $x \in \R^n$ distributed
according to $P$.  Given a map $f : \R^n \to \R^n$, the pushforward
$fP$ is then sampled as $x' \sim f(s())$, and its density is computed
as
\begin{equation}
  fp(x') = \frac{1}{|\det(Jf_x)|}p(x),\ \ \mathrm{for}\ x = f^{-1}(x'). \label{eq:naive-density}
\end{equation}
The Jacobian-determinant correction $1/|\det(Jf_x)|$ accounts for the
possibility that $f$ changes the volume of an infinitesimal volume
element near $x$.  The Jacobian determinant can be computed by forming
the Jacobian of $f$, for example with automatic differentiation; but
for many functions $f$, it's available more efficiently.
Thus a conventional choice is to package such $f$, together with their
inverse $f^{-1}$, in a Bijector class with a method for computing said
Jacobian determinant.

This conventional architecture admits a serious bug.  We name this bug
the \emph{Base Measure Problem}, because it consists of neglecting the
base measure with respect to which we are computing densities.  Our
contributions answer these questions:
\begin{itemize}
\item What's the problem?  We give a clear and intuitive example of
  the Base Measure Problem in Section~\ref{sec:example};
\item How common is it?  We briefly survey several areas where the Base
  Measure Problem recurs in different guises in Section~\ref{sec:occurrences};
\item What's the right answer?  We propose a more nuanced standard base
  measure in Section~\ref{sec:solution}, and derive complete density
  manipulation formulas for that choice using standard results; and
\item How do we fix our software?  We detail common special cases of
  our formulas in Section~\ref{sec:software}, as well as how
  to arrange a software library to optimize them.
\end{itemize}

We stress that while we do propose to explicitly represent information
about measures, all the information we will need will be local to a
single point (the point itself and various directional derivatives thereat),
and thus require no symbolic algebra to compute with.

\section{Motivating Example}
\label{sec:example}

Consider the uniform distribution $P$ on the unit
circle in $\R^2$.  The natural probability density function to write
down for this is

\begin{equation*}
  p(x, y) =
  \begin{cases}
    \frac{1}{2\pi} &\mathrm{when}\ x^2 + y^2 = 1, \\
    0 &\mathrm{otherwise}.
  \end{cases}
\end{equation*}

Of course, if we were sticklers we would note that the base measure
implied by the type of the samples is Lebesgue measure on $\R^2$; and
with respect to this base measure the density of $P$ is $+\infty$ on
points on the unit circle.  But that's clearly less helpful to our
users than $1/2\pi$, and $1/2\pi$ is \emph{a} density for $P$, just
with respect to\footnote{Technically, the density function is the
  Radon--Nikodym derivative of our probability measure with
  respect to the base measure.} Lebesgue measure along the circle.  So
let's wing it and go with that.

\begin{figure}
  \centering
  \includegraphics[width=0.48\textwidth,trim=40 10 60 10,clip]{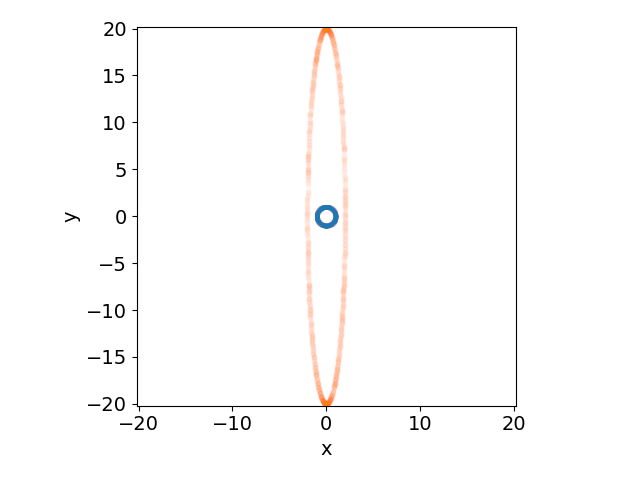}
  \caption{3000 samples from the uniform distribution on the unit
    circle (blue), and the same samples scaled non-isotropically by 2x
    along the $x$ axis and 20x along the $y$ axis (orange).  The
    transformed distribution is clearly not uniform, even though the
    original distribution is, and the Jacobian of the transformation
    is constant over $\R^2$.}
  \label{fig:ellipse}
\end{figure}

Now, a user looking at a density computed for a sample has no way to
know which base measure we meant, so we have created an instance of
the base measure problem.  To see how it bites us,
let $f$ be a somewhat contrived non-isotropic scaling of $\R^2$,
given by $f(x, y) = (2x, 20y)$.  What happens when we try to
compose our uniform distribution $P$ with our non-isotropic scaling
$f$?  The sampler is fine, but the conventional density
rule~(\ref{eq:naive-density}) gives
\begin{align*}
  \pbad(x', y') & = \frac{1}{40}p(f^{-1}(x', y')) \\
  & =
    \begin{cases}
      \frac{1}{80\pi} &\mathrm{when}\ (x'/2)^2 + (y'/20)^2 = 1, \\
      0 &\mathrm{otherwise}.
    \end{cases}
\end{align*}

Now we have definitely made a mistake.  First of all, $\pbad$ doesn't
integrate to 1, because the perimeter of the ellipse $(x'/2)^2 +
(y'/20)^2 = 1$ is approximately $81.28$, which is considerably less
than $80\pi\approx 251.32$.  Second, as we can see by drawing a few
samples and plotting them in Figure~\ref{fig:ellipse}, the true
distribution $fp$ isn't uniform!  It's
clearly denser near $(0, 20)$ than $(2, 0)$.\footnote{With respect to arc length as the base measure.}

\subsection{What went wrong?}

The problem is that when computing the volume change induced by
the change of variables $f$, we forgot that the base measure for $P$
wasn't Lebesgue on $\R^2$.  It's actually Lebesgue
along the circle, and $f$ changes arc length differently at different
points.  Indeed, locally near the point $(x, y)$, the circle is the
line in the direction $(-y, x)$.  The directional derivative of $f$ is
$(-2y, 20x)$, and the change in arclength that $f$ induces is
therefore $\sqrt{4y^2 + 400x^2}$.  Hence the correct density is
\begin{align*}
  \pgood(x', y') =\ & 1/2\pi\sqrt{y'^2/100 + 100x'^2} \\
  &\mathrm{when}\ (x'/2)^2 + (y'/20)^2 = 1,\ \mathrm{and} \\
  &0\ \mathrm{otherwise}.
\end{align*}

\section{How common is this problem?}
\label{sec:occurrences}

While we chose to introduce the base measure problem on a continuous example, it
actually occurs most often when transforming discrete distributions
embedded in $\R^n$.  The density function of, say, the Bernoulli
distribution is $1/2$ at $0$ or $1$ and $0$ elsewhere,
but this is a density with respect to counting measure, not Lebesgue
measure on $\R$.  Therefore, when transforming this distribution with
a bijector, we should not apply any Jacobian correction, because all
bijections are counting-volume-preserving.

The same problem also shows up when dealing with distributions on
symmetric matrices, such as the Wishart or LKJ distributions.  The $k
\times k$ symmetric matrices occupy a lower-dimensional sub-manifold
of $\R^{k \times k}$, and the density of the Wishart distribution is
with respect to Lebesgue measure on that submanifold rather than all
of $\R^{k \times k}$.  Bijections will in general change the volume
under that measure differently than the volume under Lebesgue measure
on $\R^{k \times k}$.

Sometimes changes of variables occur in the inference algorithm rather
than the model.  For example, the Jacobian of the deterministic
transformation $h$ in the reversible-jump MCMC framework
\citep{green-2009-reversible} is due to interpreting $h$ as a change
of variables.  One must therefore compute this density
correction with respect to the proper base measure whenever jumping
between submanifolds of $\R^n$.  For example, such jumps would occur
in model selection involving a model with a latent von
Mises-Fisher-distributed random variable (which is defined on the unit
sphere).

The same problem can \emph{also} show up without any changes of
variables at all.  It suffices to compare the density of different
points under the same distribution, or the density of the same point
under different distributions.  For instance, the Indian GPA problem discussed in e.g. \cite{russell2018mixtures}
boils down to treating a density under counting measure as
comparable to a density under Lebesgue measure on $\R$, even though the
former represents an infinitely larger mass of probability.
This incarnation of the base measure problem is more widely studied
in the literature, albeit without a name.  Our solution smoothly
covers this scenario, as we will see in Section~\ref{sec:many-points}.

\section{What's the right answer?}
\label{sec:solution}

The root of the base measure problem is that we didn't want to compute with
measures directly, but lost the base measure when representing
probability distributions with densities.  It's not actually possible
to infer the correct base measure from the data type representing the
sample: a point on the unit circle in $\R^2$ is represented with two
floating-point numbers, but using Lebesgue measure on $\R^2$ as the
base is not helpful.

We propose standardizing on Hausdorff measure as a universal base
measure for probability densities.  Specifically, for $P$ a
distribution on a $d$-dimensional manifold $\M$ embedded in $\R^n$,
let's define the density to be with respect to the $d$-dimensional
Hausdorff measure $H^d$.  We propose $d$-Hausdorff measure because it
provides a coherent definition for ``$d$-dimensional volume'' of a
surface embedded in $\R^n$, even when the surface is
curved.\footnote{We propose standardizing at all in order to minimize
  the amount of information carried by the base measure, so it can be
  as implicit as possible.  The only degree of freedom in Hausdorff
  measure is the dimension, and that is unavoidable.} We scale $H^d$
to agree with the standard volume of straight surfaces, and thus with
Lebesgue measure when $d = n$.  We begin with fixed-dimension
distributions for simplicity, and briefly address mixed-dimension
distributions in Section~\ref{sec:other-supports}.

\subsection{Manifolds of dimension $d$}
\label{sec:single-dimension}

What do we need to represent about a sample $x$ besides
its density $p(x)$ in order to compute in this scheme?  Clearly we need to represent $d$; but, as
the example of the unit circle shows, that's not enough to transform
densities correctly.  Recall that our scaling transformation had a
constant Jacobian when viewed as a function from $\R^2$ to $\R^2$, but
we have to apply a position-dependent correction to account for its
effect on the unit circle density.  The additional thing we need to
represent is the tangent space of $\M$ at the sample point $x$.  Notably,
we do not need to computationally represent any non-local information
about $\M$ or $P$---just the tangent space at $x$, its dimensionality
$d$, and the density $p(x)$.

\begin{figure*}[tbp]
  \centering
\begin{tikzpicture}[scale=0.4]
  \draw plot [smooth] coordinates {(0,5) (3,5) (7,4) (10,0)};
  \draw plot [smooth] coordinates {(10,0) (13,2) (16,2.5)};
  \draw plot [smooth] coordinates {(16,2.5) (9,7) (0,5)};
  \node at (2,4) {$\mathcal{M}$};
  \draw[->] (17,5) -- (21,5);
  \node at (19,6) {$f$};
  \draw plot [smooth] coordinates {(22,4) (25,5.5) (29,8)};
  \draw plot [smooth] coordinates {(29,8) (30,5.5) (34,3)};
  \draw plot [smooth] coordinates {(34,3) (27,1.5) (22,4)};
  \node at (33.5,2) {$f\mathcal{M}$};
  \node[circle,fill,inner sep=1pt,label=above:$x$] at (10.5,4) {};
  \draw[->] (11,2.5) -- (12.5,3.8);
  \node at (12.5,2.85) {$\delta v_0$};
  \draw[dotted] (12.5,3.8) -- (10,5.5);
  \draw[->] (11,2.5) -- (8.5,4.2);
  \node at (9.6,2.9) {$\delta v_1$};
  \draw[dotted] (8.5,4.2) -- (10,5.5);
  \node[circle,fill,inner sep=1pt,label=above left:$f(x)$] at (28,3.5) {};
  \draw[->] (27.5,0.5) -- (29,5);
  \node at (29,2.8) {$\delta v_0'$};
  \draw[dotted] (29,5) -- (28.5,6.5);
  \draw[->] (27.5,0.5) -- (27,2);
  \node at (26.5,0.5) {$\delta v_1'$};
  \draw[dotted] (27,2) -- (28.5,6.5);
\end{tikzpicture}
  \caption{Change of local
    2-volume under a transformation from $\R^3$ to $\R^3$.  On the
    left, the small parallelogram with sides $\delta v_0$ and $\delta
    v_1$ is tangent to $\M$ at $x$ and forms a volume element.  On the
    right, $f$ takes $\M$ to $f\M$, $x$ to $f(x)$, and the volume
    element to the small parallelogram tangent to $f\M$ at $f(x)$ with
    sides $\delta v_0'$ and $\delta v_1'$.  When computing the density
    at $f(x)$ of the pushforward distribution $fP$ supported on $f\M$,
    we have to correct for the change in volume of this parallelogram,
    but we have to disregard any stretching or compression $f$
    may do in the perpendicular direction.  Theorem~\ref{thm:correction}
    formalizes this for arbitrary dimensions.}
  \label{fig:volume-elt}
\end{figure*}

Let us now formalize what we are actually trying to compute and how we
can compute it.

\begin{defn}
  Given a probability distribution $P$ over any space $\Omega$, and
  any function $f: \Omega \to \Omega'$, the \emph{pushforward}
  distribution $fP$ is the measure
  \[ fP(S) = P(f^{-1}(S)) \]
  for all events $S \subset \Omega'$ for which $f^{-1}(S)$ is
  measurable in $\Omega$.
\end{defn}

The sampler for $fP$ is just to apply $f$ to a sample from $P$.
We seek a tractable density function for $fP$, and fortunately
the following theorem gives it to us for nice $f$.

\begin{theorem}
  Consider
  \begin{itemize}
  \item
    A probability distribution $P$ over a $d$-dimensional manifold $\M$ in $\R^n$,
  \item
    with density $p$ with respect to $d$-dimensional Hausdorff measure
    $H^d$;
  \item a diffeomorphism $f: \R^n \to (f\R^n \subseteq \R^m)$; and
  \item a point $x \in \M$.
  \item Let $v_i$ be an arbitrary basis for the tangent space to $\M$ at $x$, and
    pack it into a $d$-by-$n$ matrix $V$.
  \end{itemize}
  Then
  \begin{enumerate}
  \item The pushforward $fP$ is supported on the pushforward manifold
    $f\M$ in $\R^m$.
  \item The directional derivatives $v_i' = d f(x + \eps v_i) / d \eps$
    are a basis for the tangent space to $f\M$ at $f(x)$.  Let us
    pack them into a $d$-by-$m$ matrix $V'$.
  \item The density $fp$ of $fP$ with respect to $H^d$ at $x' = f(x)$
    is
    \begin{equation}
      fp(x') = p(x)\frac{\sqrt{\det(VV^T)}}{\sqrt{\det(V'V'^T)}}.
      \label{eqn:correction}
    \end{equation}
  \end{enumerate}
  \label{thm:correction}
\end{theorem}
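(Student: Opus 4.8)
The plan is to reduce the claim to the area formula (the co-area / change-of-variables formula for the Hausdorff measure), which is the standard tool for relating integrals over a $d$-dimensional submanifold to integrals over its diffeomorphic image. The three assertions are handled in sequence. For assertion (1), since $f$ is a diffeomorphism onto its image and $\M\subseteq\R^n$ is a $d$-dimensional embedded manifold, $f\M$ is again a $d$-dimensional embedded manifold in $\R^m$; because the sampler for $fP$ is ``apply $f$ to a sample from $P$'' and $P$ is supported on $\M$, every sample of $fP$ lies in $f\M$, so $fP$ is supported there. For assertion (2), $f$ restricted to $\M$ is a diffeomorphism $\M\to f\M$, so its differential at $x$ carries the tangent space $T_x\M$ isomorphically onto $T_{f(x)}(f\M)$; the directional derivatives $v_i' = df(x+\eps v_i)/d\eps|_{\eps=0}$ are exactly the images of the basis $v_i$ under this differential, hence they form a basis of $T_{f(x)}(f\M)$, which is $d$-dimensional. (Note $V'$ is the image $Df_x V^T$ written row-wise; this is where the $d$-by-$m$ shape comes from.)

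The substance is assertion (3). First I would recall the Jacobian of a map between $d$-dimensional manifolds: if $g:\M\to\mathcal N$ is smooth and $T$ is any matrix whose rows span $T_x\M$, then the appropriate $d$-dimensional Jacobian factor at $x$ is $\sqrt{\det(Dg_x T^T (Dg_x T^T)^T)}/\sqrt{\det(TT^T)}$ — the ratio of the $d$-volume of the image parallelepiped spanned by the pushed-forward edge vectors to the $d$-volume of the original parallelepiped. Basis-independence of this ratio is the key sanity check: replacing $V$ by $AV$ for invertible $A$ multiplies both $\sqrt{\det(VV^T)}$ and $\sqrt{\det(V'V'^T)}$ by $|\det A|$, since $V' = Df_x V^T$ transforms the same way. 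With the Jacobian identified, the area formula states that for any test set $S\subseteq f\M$,
\begin{equation*}
  H^d(f^{-1}(S)\cap\M) = \int_{f^{-1}(S)\cap\M} \frac{\sqrt{\det(V'V'^T)}}{\sqrt{\det(VV^T)}}\, dH^d
\end{equation*}
(where $V,V'$ are understood as functions of the integration point), or equivalently, pushing forward, $\int_S \phi\, dH^d = \int_{f^{-1}(S)\cap\M} (\phi\circ f)\,\frac{\sqrt{\det(V'V'^T)}}{\sqrt{\det(VV^T)}}\, dH^d$ for test functions $\phi$.

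Then I would compute, for any event $S\subseteq f\M$, that
\begin{equation*}
  fP(S) = P(f^{-1}(S)) = \int_{f^{-1}(S)\cap\M} p(x)\, dH^d(x)
  = \int_S p(f^{-1}(x'))\,\frac{\sqrt{\det(VV^T)}}{\sqrt{\det(V'V'^T)}}\, dH^d(x'),
\end{equation*}
where the last step applies the area formula with the roles reversed (integrating over $f^{-1}(S)$ and substituting $x'=f(x)$, so the Jacobian factor appears inverted relative to the display above). Since this holds for all measurable $S\subseteq f\M$, the integrand is the Radon--Nikodym derivative of $fP$ with respect to $H^d$ on $f\M$, which is precisely~(\ref{eqn:correction}). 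The main obstacle is not any single computation but getting the area formula applied with the correct orientation of the Jacobian factor — it is easy to invert the ratio — and being careful that $\M$ and $f\M$ are genuine embedded submanifolds so that the classical area formula applies pointwise; I would state the $d$-dimensional Jacobian lemma and its basis-independence explicitly to keep the bookkeeping transparent, and cite a standard reference (e.g. Federer or Evans--Gariepy) for the area formula itself rather than reproving it.
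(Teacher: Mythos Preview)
Your approach is correct and genuinely different from the paper's. The paper proves Claim~3 by a pointwise localization argument: it first establishes a lemma expressing $p(x)$ as $\frac{1}{\sqrt{\det(VV^T)}}\lim_{\delta\to 0} P(G^\delta_{V+})/\delta^d$, where $G^\delta_{V+}$ is a small $n$-parallelepiped about $x$ whose section along $T_x\M$ is the $d$-parallelepiped with edges $\delta v_i$; it then applies this lemma once at $x'$ for $fP$ and once at $x$ for $P$, using that $f^{-1}(G^\delta_{V'+})$ is approximately a suitable $G^\delta_{V+}$. You instead invoke the area formula globally, identify the manifold Jacobian $J_f=\sqrt{\det(V'V'^T)}/\sqrt{\det(VV^T)}$, and read off the Radon--Nikodym derivative from $fP(S)=\int_{f^{-1}(S)}p\,dH^d=\int_S (p\circ f^{-1})\,J_f^{-1}\,dH^d$. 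Your route is cleaner and more rigorous once the area formula is granted, and your explicit check of basis-independence under $V\mapsto AV$ is a nice addition; the paper's route is more self-contained and geometrically explicit, at the cost of somewhat informal limit statements (``we may assume $p$ is constant'', ``$\M$ approaches its tangent space'').

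One slip to fix: your first displayed instance of the area formula has the wrong left-hand side. As written, $H^d(f^{-1}(S)\cap\M)=\int_{f^{-1}(S)\cap\M} J_f\,dH^d$ is false in general; the left-hand side should be $H^d(S)$ (equivalently $H^d(f(f^{-1}(S)\cap\M))$). Your ``equivalently, pushing forward'' version and the final computation are stated correctly, so the error does not propagate, but you should correct the display. A smaller point: you write ``$V'=Df_x V^T$'', but with $V$ being $d\times n$ and $Df_x$ being $m\times n$, the object $Df_x V^T$ is $m\times d$ and equals $(V')^T$, not $V'$; the transformation claim under $V\mapsto AV$ is still right, just tidy the notation.
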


Claims 1 and 2 are standard, and give us the rule for propagating a
basis for the tangent space at $x$ to $x'$.  Claim 3 amounts to a
restatement in probability terms of standard notions of change of
volume.  The intuition is that the
relevant volume is volume in the tangent space to $\M$ at $x$ (see Figure~\ref{fig:volume-elt}).  We
choose as a volume element a small parallelepiped: centered at $x$ and with sides
$\delta v_i$ for $\delta \to 0$.\footnote{Since the $v_i$ are vectors
  in the tangent space to $\M$ at $x$, it would be more natural to let
  $x$ be the vertex of the parallelepiped.  However, we want to end
  up with a neighborhood of $x$, so we shift the parallelepiped to have
  $x$ in the center.}
The volume of this element is
$\delta^d \sqrt{\det(VV^T)}$.  The function $f$ transforms this volume
element to the parallelepiped with center $x'$ and sides $\delta
v'_i$.  Therefore, the volume correction we are looking for is
$\sqrt{\det(V'V'^T)}/\sqrt{\det(VV^T)}$.
We reproduce a formal proof of Claim 3 in Appendix~\ref{sec:proof}.

As an aside, Theorem~\ref{thm:correction} applies without modification
to discrete spaces: $d = 0$, 0-Hausdorff measure is, up to scaling,
identical with the usual counting measure, and we are free to define
all functions of a discrete domain to be vacuously diffeomorphic.  The
matrices $V$ and $V'$ have zero rows, and we recover the standard
result that $fp(x') = p(x)$ for $p$ a probability mass function on a
discrete space.

Observe that the inputs required for Theorem~\ref{thm:correction} are
all local: the point $x$, a basis $v_i$ for the tangent space to $\M$
at $x$, the density $p(x)$ of $P$ at $x$, and the ability to compute
the value and directional derivatives of $f$ at $x$.  We do not need
any non-local information about $P,\ \M$, or $f$, except that $f$ is invertible.  Observe also that the
outputs given by Theorem~\ref{thm:correction} give enough information
about $fP$ at $x' = f(x)$ to apply the same theorem again with a new
diffeomorphism $g : \R^m \to (g\R^m \subseteq \R^k)$.
We invite the interested reader
to verify that pushing distributions forward by
Theorem~\ref{thm:correction} commutes with function composition, that
is, the double pushforward $g(fP)$ computes the same densities as
a single pushforward by the composition $(g\circ f)P$.  This equivalence
lends itself to modularity in software, permitting, for example, the
distribution $g(fP)$ to be represented taking advantage of special
structure in $f$ or $g$, without requiring the user to derive
that special structure for $g\circ f$.

\subsection{Mixed-dimension manifolds}
\label{sec:other-supports}

Theorem~\ref{thm:correction} has given us a formula for transforming
densities of probability distributions on $d$-dimensional manifolds
embedded in $\R^n$, and justifies standardizing on $d$-dimensional
Hausdorff measure as a universal base measure for densities of such
distributions.

The results generalize directly to finite unions of manifolds, of
potentially different dimensions.  Namely, let $P$ be a sum of measures
$P_j$ on manifolds $\M_j$ of dimensions $d_j$, where without loss
of generality we require that the intersection of any $\M_j$ and $\M_k$ have
strictly lower dimension than at least one of $d_j$ and $d_k$.
In this situation, we use
the sum of the corresponding Hausdorff measures $H^{d_j}$ on those
manifolds as the base measure.  Then the density $p(x)$ representing
$P$ is given by the density $p_j(x)$ of the $P_j$ corresponding to the
lowest-dimensional manifold $\M_j$ that $x$ is on.
If in addition to $\M_j$, $x$ is on manifolds of higher dimension, any
densities $p_{j'}$ corresponding to measures on those higher-dimensional
manifolds become zero with respect to $H^{d_j}$.

If $x$ occurs on
multiple support manifolds of minimal dimension, then by assumption
$x$ is part of a lower-dimensional intersection, whose total mass
under $P$ is necessarily zero.  We are therefore formally free to
assign whatever probability density we wish to $x$.  We conjecture
that reporting the sum $\sum_{\{j | d_j \mathrm{minimal}\}} P_j(x)$
will follow the principle of least surprise, but are prepared to
be corrected by future experience in probabilistic programming.

\subsection{Multiple points}
\label{sec:many-points}

When doing probabilistic inference, we concern ourselves with more
than just one point $x$. We must compare the probability densities of
different points to assess which are the best (and by how much), and
we must add the probability densities of different points to form an
empirical mixture.  In MCMC we do the former when computing the
acceptance ratio; and in SMC and likelihood weighting we do both when
we resample or form the self-normalizing estimate of the evidence.
These comparisons and additions are a priori vulnerable to the base
measure problem; so to fix them, let us derive measure-aware formulas.

When computing a probability density ratio $p(y) / p(x)$,
we are actually computing the limiting ratio of finite probability masses
\begin{align}
  \Rat_P(y, x) = \lim_{\delta \to 0} P(B^\delta_y) / P(B^\delta_x),
  \label{eqn:ratio-limit}
\end{align}
where $B^\delta_x$ denotes the radius-$\delta$ ball about
$x$.\footnote{For this to be well-defined, $x$ and $y$ must live in a
single metric space.  For purposes of computational probability, it
suffices to assume we are operating in a disjoint sum of as many
copies of $\R^n$, for as many values of $n$, as needed.  We use the
Euclidean distance as a metric within each $\R^n$, and posit that the
distance between points in different $\R^n$ is infinite.} As $\delta$
becomes small, $P(B^\delta_x)$ becomes $C p(x) \delta^{d_x}$, where
$d_x$ is the (Hausdorff) dimension of the support of $P$ at $x$,
and $C$ is a constant that depends on the metric but not on $\delta$ or $x$.  Our
limit is therefore
\begin{align}
  \Rat_P(y, x) = \lim_{\delta \to 0} \delta^{d_y - d_x} p(y) / p(x).
  \label{eqn:ratio}
\end{align}

In the common case where $d_x = d_y$, this of course reduces to $p(y) / p(x)$; but
in the general case, (\ref{eqn:ratio}) becomes
\[ \Rat_P(y, x) = \begin{cases}
  p(y) / p(x) & \mathrm{when}\ d_y = d_x; \\
  0 & \mathrm{when}\ d_y > d_x, p(y) < \infty, \\
    & \mathrm{and}\ p(x) > 0; \\
  \infty & \mathrm{when}\ d_y < d_x, p(x) < \infty, \\
    & \mathrm{and}\ p(y) > 0; \\
  \mathrm{undefined} & \mathrm{otherwise}.
\end{cases}
\]
Note that in the ``otherwise'' case, the limit (\ref{eqn:ratio-limit})
may well exist, but the quantities $d_x$, $d_y$, $p(x)$, and $p(y)$ do
not suffice to determine it; so for computational purposes, we
throw up our hands and say ``undefined''.

Summation is more subtle, because it is not dimensionless.
Let us define the two-output function
\begin{align}
  \Summ_P(x, y) = d, \lim_{\delta \to 0} \frac{P(B_x^\delta) + P(B_y^\delta)}{C\delta^d},
  \label{eqn:sum-limit}
\end{align}
where $d$ is chosen to make the limit finite and positive.\footnote{This $d$ is unique when it exists.}
Then, using the same notation for $d_x$ and $d_y$, we derive
\[ \Summ_P(y, x) = \begin{cases}
  d_x, p(x) + p(y) & \mathrm{when}\ d_x = d_y; \\
  d_x, p(x) & \mathrm{when}\ d_x < d_y, p(y) < \infty, \\
    & \mathrm{and}\ p(x) > 0; \\
  d_y, p(y) & \mathrm{when}\ d_x > d_y, p(x) < \infty, \\
    & \mathrm{and}\ p(y) > 0; \\
  \mathrm{undefined} & \mathrm{otherwise}.
\end{cases}
\]
Again, we let the ``otherwise'' case be ``undefined'', because while
there may still be a $d$ for which the limit in (\ref{eqn:sum-limit})
exists, we cannot compute it from just $d_x$, $d_y$, $p(x)$, and
$p(y)$.

For completeness, we mention multiplying probabilities, even though it
is straightforward:
\[ \Prodd_P(x, y) = d_x + d_y, p(x) p(y). \]

Note that to compute any of $\Rat_P$, $\Summ_P$, or $\Prodd_P$, the
only information we need about the measure $P$ is local: the density
$p(\cdot)$ and support dimension $d_{(\cdot)}$ at the points $x$ and
$y$.  The support dimension $d$ tells us which Hausdorff measure $H^d$
serves as the base measure for the density $p(\cdot)$, and that's
enough to add, multiply, and divide, even when $P$ spans manifolds of
many dimensionalities in different places.  This $d$ is the number of
basis vectors spaning the tangent space required to compute volume
corrections per Section~\ref{sec:single-dimension}; but for $\Rat_P$,
$\Summ_P$, and $\Prodd_P$ we do not actually need the basis itself.

\section{How do we fix the software?}
\label{sec:software}

We discuss how to adjust current probabilistic programming software to
fix the base measure problem.  The largest change is needed to enable
correct changes of variables (Section~\ref{sec:change-of-vars}); given
that, updating common inference algorithms is a local adjustment, which
we explicate for completeness in Section~\ref{sec:inference-algs}.

\subsection{Correct changes of variables}
\label{sec:change-of-vars}

Correctly updating densities under changes of variables requires two
new architectural features of probabilistic programming software:
First, querying a probability distribution $P$ at a point $x$ must
produce a local measure, which includes the density $p(x)$, the local
dimension $d_x$, and a representation of the tangent space at $x$ to
the support manifold $\M$.
Second, changes of variable induced by deterministic transformations
must take the tangent space into account.
The current standard architecture can be seen as the specialization of
the above to the case when the dimension $d$ is fixed, and the support
manifold $\M$ is the entire host space $\R^n$.

To elaborate, we know from Section~\ref{sec:single-dimension} that to compute
the density of a pushforward $fP$ at a point $x'$ with respect to
$d$-Hausdorff measure it suffices to compute
\begin{itemize}
\item The preimage point $x = f^{-1}(x')$;
\item The density $p(x)$ of $P$ at $x$;
\item A basis $v_i$ for the tangent space to the support $\M$ of $P$ at $x$;
\item The directional derivatives $v_i'$ of $f$ at $x$ in directions
  $v_i$, forming a basis for the tangent space at $x'$ to the support
  of $fP$; and
\item The correction term
  $\sqrt{\det(VV^T)}/\sqrt{\det(V'V'^T)}$ from
  (\ref{eqn:correction}).
\item For composition, we also need to return the pushforward tangent basis $v_i'$
\end{itemize}

At first glance this may seem prohibitively expensive: even if we have
an automatic differentiation system ready to hand that lets us compute
the needful directional derivatives of $f$, do we really have to do
that, and then perform two full matrix multiplies and determinants,
for every probability density evaluation?

Fortunately, there are several special cases we can take advantage of
to save work, and to recover the performance of the conventional
architecture when it gives the correct answer:

\begin{itemize}
\item If our distribution $P$ is actually discrete, the tangent space
  is 0-dimensional, $VV^T$ is the 0x0 matrix, and its
  determinant is vacuously 1.
\item If $P$ is supported on all of $\R^n$, the tangent space is also
  $\R^n$, and we may take $v_i$ to be the standard basis, so the
  $\sqrt{\det(VV^T)}$ term is unity and we need not explicitly compute it.
\item If in addition $f$ maps $\R^n$ to $\R^n$ and not to $\R^m$ for
  $m > n$, then the matrix $V'$ is the Jacobian of $f$ at $x$ and
  itself has a determinant.  We can save a matrix multiply and a
  square root because in this case $\sqrt{\det(V'V'^T)} = |\det V'|$.
  This is where we recover the conventional Jacobian-determinant correction to
  probability densities.
\item Distributions on structured-sparse matrices, such as
  lower-triangular matrices, are supported on manifolds with
  axis-aligned tangent spaces.  In such a case, we need only keep
  track of a mask representing the present dimensions, taking the
  basis $v_i$ to be a subset of the standard basis.
  The $\sqrt{\det(VV^T)}$ term again disappears, though if $f$ does
  not preserve the sparsity, the $\sqrt{\det(V'V'^T)}$ term may need to be
  computed.
\item If in addition $f$ is a univariate transformation applied
  coordinate-wise, the sparse structure will be preserved, and the
  correction will just be the product of partial derivatives of $f$,
  $\sqrt{\det(V'V'^T)} = \prod_i \frac{\partial}{\partial x_i}f(x)$,
  with $i$ here ranging over the dimensions present in the manifold
  but not the others.
\item Finally, simplices and symmetric matrices may also merit special
  treatment, because the basis $v_i$ can again be implicit.
\end{itemize}

This list of special cases suggests ad-hoc polymorphism as a
representation strategy.  Specifically, a typical probabilistic
programming system probably already has a Bijector class (hierarchy)
for representing transformations $f$.  To implement this strategy for
volume corrections, we can add a TangentSpace class hierarchy for the
tangent space to $\M$ at $x$.  This second hierarchy can have
dedicated classes for efficient cases, such as the zero space
tangent to a discrete support, the full space tangent to
an $n$-dimensional support in $\R^n$, and so on.  To take
advantage of special cases based both on the tangent space and the
bijector, we can define the density correction function with
two-argument dispatch, or a Visitor-like pattern \citep{gang-of-four}
to emulate it in languages where only single-argument dispatch is
available.

When working with distributions on high-dimensional spaces, or with
large batches of distributions, we expect the cost of the dispatch to
be small compared with the cost of the linear algebra the dispatch
avoids.  Further, on a tracing platform like JAX~\citep{jax},
TensorFlow~\citep{tensorflow2015-whitepaper}, or
TorchScript~\citep{torchscript}, the dispatch only occurs once during
tracing, avoiding repeated linear algebra during execution.

\subsection{Measure-aware inference algorithms}
\label{sec:inference-algs}

Once a probabilistic programming system correctly propagates local
measures per Section~\ref{sec:change-of-vars}, defending standard
inference algorithms against the base measure problem is just a matter
of using the formulas from Section~\ref{sec:many-points} for
operations that compare or combine probability densities.  We note
again that only the density and local dimension are needed here;
the tangent basis is only used for changing variables.

We spell out two standard inference algorithms explicitly.  To
resample in SMC, just throw out particles of non-minimally-dimensioned
weights, as in Algorithm~\ref{alg:smc}.  In the one-stage case this
reduces to the dimensionality-aware likelihood weighting algorithm of
\cite{russell2018mixtures}.
\begin{algorithm}[h]
  \caption{Measure-aware resampling for SMC}
  \label{alg:smc}
  \begin{algorithmic}
    \STATE {\bfseries Input:} Particles $p_j$, dimensioned weights $(d_j, w_j)$
    \STATE {\bfseries Input:} Desired number of new particles $N$.
    \STATE Compute $d = \min d_j$.
    \IF{All $w_j = 0$ where $d_j = d$}
    \STATE {\bfseries Error:} The resampling is undefined.
    \ENDIF
    \IF{Any $w_j = +\infty$ where $d_j > d$}
    \STATE {\bfseries Error:} The resampling is undefined.
    \ENDIF
    \FOR{$i=1$ {\bfseries to} $N$}
    \STATE Sample index $k_i \in \{j | d_j = d\}$ proportional to $w_j$.
    \STATE {\bfseries Output} $p_{k_i}$.
    \ENDFOR
    \STATE {\bfseries Return} Dimensioned self-normalization estimate $(d, \sum_{j | d_j = d} w_j)$.
  \end{algorithmic}
\end{algorithm}

To compute the Metropolis-Hastings acceptance or rejection for MCMC,
just compare the target density dimension-major, as in
Algorithm~\ref{alg:mh}.  Note that the proposed state $y$ must live on
a same-dimension manifold as the current state $x$, otherwise the
proposal $q$ is not reversible.  This can be arranged with the
usual reversible-jump MCMC technique from \cite{green-2009-reversible}.
However, for general probabilitic programs $\pi$, $x$ and $y$ may yield
differently-dimensioned likelihoods, and M-H must account for this.
\begin{algorithm}[h]
  \caption{Measure-aware Metropolis-Hastings}
  \label{alg:mh}
  \begin{algorithmic}
    \STATE {\bfseries Input:} Target $\pi(\cdot)$, proposal $q(\cdot|\cdot)$, current state $x$
    \STATE Sample proposed state $y \sim q(\cdot|x)$.
    \STATE Evaluate dimensioned probabilities
    \STATE $(d_x, p_x) = \pi(x)$
    \STATE $(d_y, p_y) = \pi(y)$
    \IF{$d_y > d_x$}
    \IF{$p_x \neq 0$ and $p_y < +\infty$}
    \STATE {\bfseries Reject:} $d_x, p_x$ dominates $d_y, p_y$.
    \ELSE
    \STATE {\bfseries Error:} The comparison is undefined.
    \ENDIF
    \ENDIF
    \IF{$d_y < d_x$}
    \IF{$p_y \neq 0$ and $p_x < +\infty$}
    \STATE {\bfseries Accept:} $d_y, p_y$ dominates $d_x, p_x$.
    \ELSE
    \STATE {\bfseries Error:} The comparison is undefined.
    \ENDIF
    \ENDIF
    \STATE Sample $u$ uniformly between 0 and 1.
    \STATE {\bfseries Accept} if $u < \frac{p_y q(x|y)}{p_x q(y|x)}$, else {\bfseries Reject}.
  \end{algorithmic}
\end{algorithm}

For example, in the much-discussed Indian GPA problem (see
\cite{russell2018mixtures} for one restatement), we might do MCMC on a
boolean latent variable modeling whether a student is American or
Indian.  The dimensioned likelihood of the observed GPA of 4.0 for an
American student would be something like $(0, 0.1)$, indicating a
point-mass of size 0.1 at the maximum GPA that American high schools
report.  Similarly, the dimensioned likelihood of a 4.0 GPA for an
Indian student would be $(1, 0.1)$, being the density of a scalar
continuous uniform distribution of GPAs between 0 and 10.
Algorithm~\ref{alg:mh} would then correctly accept every proposal to
change the latent from ``Indian'' to ``Ameican'', and reject every
proposal to change back.

\section{Related Work}

Broadly, all general-purpose probabilistic programming systems are
related to the present work, in that they must either produce
incorrect results when the base measure problem arises
\citep{quantized-distribution-bug, venture-base-measure-problem}, avoid
the problem entirely, or somehow solve it.  Three specific systems are
worth mentioning:

Stan \citep{carpenter2017stan} is an example of a system that avoids
the base measure problem.  There are two places in a Stan model where
the base measure problem might occur: in the automatic unconstraining
transformations and in applying densities to transformed parameters.
The former Stan can get right because both the base measure and
unconstraining transformation are implied by the type of the parameter
being unconstrained.  The base measure problem occurs, in a sense, but
the remedy is both local and internal to the implementation of Stan.
The latter Stan pushes on the user: if a user writes a Stan model
with a parameter $x$ and a transformed parameter $f(x)$, and wishes
to code a density on $f(x)$ that corresponds to the pushforward of
some known density $p(x)$, then that user must think about base measures
and partial derivatives themselves.  Stan does not help; but then
again, it also does not claim it would help, so Stan doesn't compute
anything visibly ``incorrect''.

BLOG discusses addressing the base measure problem (not by that name)
in \cite{russell2018mixtures}.  Relative to the present work, that
treatment is simpler in two ways: they only discuss scalar random
variables, where every tangent space is either $\R$ or the zero vector
space; and they do not explicitly discuss transformations of random
variables.

Some cases of the base measure problem are addressed by Hakaru's
disintegration transform \citep{narayanan-shan-2020}, which generalizes
density and also needs correct base measures. Hakaru defines a
restricted language of base measures with respect to which it can
symbolically compute (unnormalized) disintegrations of s-finite
measures. As this restricted set includes discrete-continuous
mixtures, it is sufficient to correctly handle the Indian GPA problem;
but for the present example of the unit circle, it would
need a richer notion of constraints and their tangent or normal spaces \citep{hakaru-base-measure-problem}.

For example, the following model of the ellipse example from
Section~\ref{sec:example} is expressible in the system from
\citep{narayanan-shan-2020} but gives an incorrect disintegration:
\begin{verbatim}
x ~ lebesgue
y ~ lebesgue
observe x**2 + y**2 = 1
return (2*x, 20*y)
\end{verbatim}
The issue is that the constraint restricts the distribution to a
lower-dimensional manifold $\M$, and the tangent space to that
manifold is not taken into account when applying the Jacobian
correction due to the transformation $f(x, y) = (2x, 20y)$.  This
tangent space could be recovered, since it is perpendicular to the
gradient of the constraint, so we hope that this manifestation of
the base measure problem will be relatively easy to fix in Hakaru.

There is also a line of work on formal semantics of probabilistic
programming languages, for example \cite{borgstrom-etal-2013} and
\cite{staton-etal-2016}.  That work
taken on its own terms is generally not vulnerable to the base measure
problem at all, because the semantics are invariably in given terms of
measures, not densities.  The present work can be viewed as a bridge
to implementation, providing a complete and efficient local representation of a measure.

\section{Future Work}

A common thread in the above related work is to address base measures
dimension-wise.  In other words, reduce to the scalar-oriented setting
and track one ``discrete or $\R$'' bit per scalar random variable.
This is appealing because no explicit tangent spaces are needed, since
they are all either trivial or $\R$.  A vector-valued distribution $P$
on a $d$-dimensional manifold embedded in $\R^n$ can then be modeled
as $n$ deterministic scalar functions $f_i$ from $d$ scalar random
variables.  Computing the density of $P$ involves constraining the
outputs of the $f_i$ one by one, with a density correction given by
the derivative of $f_i$ as appropriate.  After $d$ irredundant
constraints, the remaining distribution is discrete, and no further
derivatives are needed.

It would be interesting to work out whether
the preceding strategy actually works.  The difficulty we predict is that,
even if a joint function $f$
is invertible as a vector-valued function, its coordinates $f_i$ may
not be invertible individually.  In this case, treating them separately
seems to require explicitly manipulating their preimages, which may be
non-trivial discrete sets.  For example, if we represent a
distribution on the unit circle as a distribution on $[0, 2\pi)$
  followed by the functions $(\sin \cdot, \cos \cdot)$ then the
  preimage of $\sin$ doesn't actually give us a unique point at which
  to evaluate its derivative; whereas the preimage of $\cos$, while no
  longer contributing a derivative per se, is still needed to identify
  the correct branch of $\sin$.

\section{Conclusion}

We have named the \emph{Base Measure Problem} and provided a solution
to it.  Implementing the solution in probabilistic programming systems
should cause negligible loss of performance for cases that were
already correctly handled, and expand the set of models in which the
system can compute correct probability densities.  Implementation does
carry a code complexity cost, but that cost is minimized by using
two-argument dispatch, or emulating it with a Visitor pattern.
Despite correctly accounting for measures, no non-local information is
required.

\section{Acknowledgments}


The authors thank Srinivas Vasudevan, Wynn Vonnegut, and Praveen Narayanan for
enlightening discussion.  The authors also thank the anonymous
reviewers for valuable constructive criticism.

\bibliographystyle{apalike}
\bibliography{main}

\begin{thebibliography}{}

\bibitem[Abadi et~al., 2015]{tensorflow2015-whitepaper}
Abadi, M., Agarwal, A., Barham, P., Brevdo, E., Chen, Z., Citro, C., Corrado,
  G.~S., Davis, A., Dean, J., Devin, M., Ghemawat, S., Goodfellow, I., Harp,
  A., Irving, G., Isard, M., Jia, Y., Jozefowicz, R., Kaiser, L., Kudlur, M.,
  Levenberg, J., Man\'{e}, D., Monga, R., Moore, S., Murray, D., Olah, C.,
  Schuster, M., Shlens, J., Steiner, B., Sutskever, I., Talwar, K., Tucker, P.,
  Vanhoucke, V., Vasudevan, V., Vi\'{e}gas, F., Vinyals, O., Warden, P.,
  Wattenberg, M., Wicke, M., Yu, Y., and Zheng, X. (2015).
\newblock {TensorFlow}: Large-scale machine learning on heterogeneous systems.
\newblock Software available from tensorflow.org.

\bibitem[Borgström et~al., 2013]{borgstrom-etal-2013}
Borgström, J., Gordon, A., Greenberg, M., Margetson, J., and Van~Gael, J.
  (2013).
\newblock Measure transformer semantics for bayesian machine learning.
\newblock {\em Logical Methods in Computer Science}, 9(3).

\bibitem[Bradbury et~al., 2019]{jax}
Bradbury, J., Frostig, R., Hawkins, P., Johnson, M., Leary, C., Maclaurin, D.,
  and Wanderman-Milne, S. (2017--2019).
\newblock {JAX}.
\newblock Specifically the vmap functionality.

\bibitem[Carpenter et~al., 2017]{carpenter2017stan}
Carpenter, B., Gelman, A., Hoffman, M.~D., Lee, D., Goodrich, B., Betancourt,
  M., Brubaker, M., Guo, J., Li, P., and Riddell, A. (2017).
\newblock Stan: {A} probabilistic programming language.
\newblock {\em Journal of statistical software}, 76(1).

\bibitem[Gamma et~al., 1995]{gang-of-four}
Gamma, E., Helm, R., Johnson, R., and Vlissides, J. (1995).
\newblock {\em Design Patterns: Elements of Reusable Object-Oriented Software}.
\newblock Addison Wesley.

\bibitem[Green and Hastie, 2009]{green-2009-reversible}
Green, P.~J. and Hastie, D.~I. (2009).
\newblock Reversible jump mcmc.
\newblock {\em Genetics}, 155(3):1391--1403.

\bibitem[Li and Russell, 2013]{li2013blog}
Li, L. and Russell, S.~J. (2013).
\newblock The blog language reference.
\newblock {\em tech. rep., Technical Report UCB/EECS-2013--51}.

\bibitem[Mansinghka et~al., 2014]{mansinghka2014venture}
Mansinghka, V., Selsam, D., and Perov, Y. (2014).
\newblock Venture: a higher-order probabilistic programming platform with
  programmable inference.
\newblock {\em arXiv preprint arXiv:1404.0099}.

\bibitem[Narayanan, 2020]{hakaru-base-measure-problem}
Narayanan, P. (2020).
\newblock Personal communication.

\bibitem[Narayanan and Shan, 2020]{narayanan-shan-2020}
Narayanan, P. and Shan, C.-c. (2020).
\newblock Symbolic disintegration with a variety of base measures.
\newblock {\em ACM Trans. Program. Lang. Syst.}, 42(2).

\bibitem[Paszke et~al., 2019]{paszke2019pytorch}
Paszke, A., Gross, S., Massa, F., Lerer, A., Bradbury, J., Chanan, G., Killeen,
  T., Lin, Z., Gimelshein, N., Antiga, L., et~al. (2019).
\newblock Pytorch: An imperative style, high-performance deep learning library.
\newblock In {\em Advances in Neural Information Processing Systems}, pages
  8026--8037.

\bibitem[{PyTorch Contributors}, 2018]{torchscript}
{PyTorch Contributors} (2018).
\newblock Torch script.

\bibitem[Staton et~al., 2016]{staton-etal-2016}
Staton, S., Yang, H., Wood, F., Heunen, C., and Kammar, O. (2016).
\newblock Semantics for probabilistic programming.
\newblock {\em Proceedings of the 31st Annual ACM/IEEE Symposium on Logic in
  Computer Science}.

\bibitem[{The TFP Team}, 2019]{tfp-repo}
{The TFP Team} (2018--2019).
\newblock {TensorFlow} {P}robability.

\bibitem[{The Venture Team}, 2017]{venture-base-measure-problem}
{The Venture Team} (2017).
\newblock Venture suffers from the base measure problem.
\newblock Personal communication.

\bibitem[Tolpin et~al., 2016]{toplin-et-al-anglican}
Tolpin, D., van~de Meent, J.-W., Yang, H., and Wood, F. (2016).
\newblock Design and implementation of probabilistic programming language
  anglican.
\newblock In {\em Proceedings of the 28th Symposium on the Implementation and
  Application of Functional Programming Languages}, IFL 2016, New York, NY,
  USA. Association for Computing Machinery.

\bibitem[Warfield, 2020]{quantized-distribution-bug}
Warfield, R. (2020).
\newblock Tensorflow probability issue \#761: Incorrect probabilities after
  transforming a quantizeddistribution.

\bibitem[Wu et~al., 2018]{russell2018mixtures}
Wu, Y., Srivastava, S., Hay, N., Du, S.~S., and Russell, S. (2018).
\newblock Discrete-continuous mixtures in probabilistic programming:
  Generalized semantics and inference algorithms.
\newblock In {\em International Conference on Machine Learning}.

\end{thebibliography}

\appendix

\section{Proof of Theorem~\ref{thm:correction}, Claim 3}
\label{sec:proof}

To prove Theorem~\ref{thm:correction}, Claim 3 formally, we go through the measures $P$ and $fP$,
starting with the following lemma.  The only technical trick is to
enlarge our parallelepipeds to open sets in $\R^n$ and $\R^m$, so that
measuring them with $P$ and $fP$ captures all the mass near $x$ and
$x'$, respectively, despite any curvature of $\M$ or $f\M$.  Then we
will note that in the limit the only thing we care about is the
projections of those open sets back to the respective tangent spaces,
which are the parallelepipeds we started with.

\begin{lemma}
  Let $P$, $\M$, $p$, $x$, $v_i$, and $V$ be as in the statement of Theorem~\ref{thm:correction}.  Let
  $G_V$ be the $d$-parallelepiped with center $x$ and sides
  $v_i$, and let $G_{V+}$ be any $n$-parallelepiped centered at $x$ with section $G_V$.
  Let $G_V^\delta$ and $\neigh$ be $G_V$ and $G_{V+}$, respectively, scaled by $\delta$ about $x$.
  Then the density $p$ of $P$ with respect
  to $d$-dimensional Hausdorff measure is given by
  \[ p(x) = \frac{1}{\sqrt{\det(VV^T)}}
     \lim_{\delta \to 0} P(\neigh) / \delta^d. \]
\end{lemma}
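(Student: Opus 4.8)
The plan is to compute the limit $\lim_{\delta\to 0} P(\neigh)/\delta^d$ directly from the definition of $P$ via its density with respect to $H^d$, and show it equals $\sqrt{\det(VV^T)}\,p(x)$. Since $P(\neigh) = \int_{\neigh \cap \M} p\, dH^d$, the key is to understand $H^d(\neigh \cap \M)$ as $\delta \to 0$. The essential geometric fact is that near $x$ the manifold $\M$ is well-approximated by its tangent space $T_x\M$: writing $\M$ locally as a graph over $T_x\M$, the piece of $\M$ lying inside the thin ``slab'' $\neigh$ projects (orthogonally onto $T_x\M$) onto a region that converges, after rescaling by $1/\delta$, to the parallelepiped $G_V$ of side vectors $v_i$. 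The reason we inflate $G_V$ to the full-dimensional box $G_{V+}$ is exactly so that this slab is thick enough in the normal directions to contain all of $\M$ near $x$ despite curvature: the deviation of $\M$ from $T_x\M$ is $O(\delta^2)$ over a neighborhood of radius $O(\delta)$, which is eventually dominated by the fixed (rescaled) normal extent of $G_{V+}$.

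First I would set up local coordinates: parametrize a neighborhood of $x$ in $\M$ by $\phi: U \subseteq \R^d \to \M$ with $\phi(0) = x$ and $D\phi(0)$ having columns $v_i$ (so $D\phi(0) = V^T$ as an $n\times d$ matrix). The area formula for the $d$-dimensional Hausdorff measure on $\M$ gives
\[
  H^d(\neigh \cap \M) = \int_{\phi^{-1}(\neigh)} \sqrt{\det\big(D\phi(u)^T D\phi(u)\big)}\, du,
\]
and $P(\neigh) = \int_{\phi^{-1}(\neigh)} p(\phi(u)) \sqrt{\det(D\phi(u)^T D\phi(u))}\, du$. Next I would rescale: substitute $u = \delta t$, so $du = \delta^d\, dt$, and the domain $\phi^{-1}(\neigh)$ becomes, after dividing by $\delta$, a set $E_\delta \subseteq \R^d$. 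The crux is the claim that $\mathbf{1}_{E_\delta}(t) \to \mathbf{1}_{G_V^\#}(t)$ for a.e.\ $t$, where $G_V^\#$ is the parallelepiped $\{\sum c_i v_i : |c_i| \le 1/2\}$ expressed in the coordinate $t$, i.e.\ $G_V^\#$ has $d$-volume (Lebesgue measure on $\R^d$) equal to $1$. This follows because $\phi(\delta t) = x + \delta V^T t + O(\delta^2)$, the condition ``$\phi(\delta t) \in G_{V+}^\delta$'' amounts to ``$V^T t + O(\delta)$ lies in the unit-scaled box'', and in the limit the normal components are irrelevant (the box is full-dimensional so they land inside) while the tangential components recover exactly the section $G_V$.

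Then I would pass to the limit. Since $D\phi$ is continuous, $\sqrt{\det(D\phi(\delta t)^T D\phi(\delta t))} \to \sqrt{\det(VV^T)}$ uniformly on compacts; since $p$ restricted to $\M$ is the $H^d$-density and $x$ is (at worst, and generically) a Lebesgue point, $p(\phi(\delta t)) \to p(x)$; and $E_\delta$ is eventually contained in a fixed compact set, so dominated convergence applies. Therefore
\[
  \frac{P(\neigh)}{\delta^d} = \int_{E_\delta} p(\phi(\delta t)) \sqrt{\det\big(D\phi(\delta t)^T D\phi(\delta t)\big)}\, dt \;\longrightarrow\; p(x)\sqrt{\det(VV^T)}\cdot \mathrm{vol}(G_V^\#),
\]
and $\mathrm{vol}(G_V^\#) = 1$ by our normalization of the coordinate $t$. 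Dividing by $\sqrt{\det(VV^T)}$ gives the lemma.

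I expect the main obstacle to be the pointwise-convergence claim $\mathbf{1}_{E_\delta} \to \mathbf{1}_{G_V^\#}$, specifically making precise that the curvature of $\M$ (the $O(\delta^2)$ term) does not spoil the limit: one must check both that points of $\M$ near $x$ that should be counted are not pushed out of the slab $\neigh$ by curvature, and that no spurious faraway sheets of $\M$ wander into $\neigh$ — the latter handled by taking $\delta$ small enough that $\neigh$ is confined to the single coordinate chart $\phi(U)$. A secondary subtlety is the treatment of the boundary $\partial G_V^\#$, which has Lebesgue measure zero in $\R^d$ and hence does not affect the integral, so almost-everywhere convergence suffices. Everything else is the area formula plus dominated convergence.
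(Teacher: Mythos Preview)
Your proposal is correct and follows the same conceptual route as the paper: write $P(\neigh)=\int_{\neigh\cap\M} p\,dH^d$, approximate $\neigh\cap\M$ by the tangential parallelepiped $G_V^\delta$ as $\delta\to 0$, and read off $H^d(G_V^\delta)=\delta^d\sqrt{\det(VV^T)}$. The paper's argument is considerably more informal---it simply asserts that $p$ may be treated as constant and that $\neigh\cap\M$ ``becomes'' $G_V^\delta$---whereas you supply the rigorous scaffolding (local chart, area formula, rescaling, dominated convergence, Lebesgue-point and curvature remarks) that actually justifies those assertions.
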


\begin{proof}[Proof of lemma]
The definition of $p$ being a density for $P$ with
respect to $H^d$ is that for all measurable subsets $B \subset \R^n$,
\[ P(B) = \int_B p\, dH^d. \]
$P$ is only supported on $\M$, so we may take the integral on the right
to be over the intersection $B \cap \M$.  We choose for $B$ the
neighborhood $\neigh$.

As $\neigh$ becomes sufficiently small, we may assume $p$ is
constant on $\neigh \cap \M$, so
\[ P(G^\delta_{V+}) \approx p(x) H^d(G^\delta_{V+} \cap \M). \]
As $\delta \to 0$, $\M$ near $x$ approaches its tangent space, so
$G^\delta_{V+} \cap \M$ becomes $G^\delta_V$.  The right-hand term
becomes the $d$-dimensional Hausdorff measure of $G^\delta_V$, which
given our choice of scaling for the Hausdorff measure is just the
volume thereof, namely $\delta^d \sqrt{\det(VV^T)}$.
\end{proof}

\begin{proof}[Proof of Claim 3]
  Let $G_{V'}$ be the $d$-parallelepiped in $\R^m$ with center
  $x'$ and sides $v'_i$.  Let $G_{V'+}$ be any $m$-parallelepiped in $\R^m$
  with section $G_{V'}$ centered at $x'$.  Let $\neighp$ be $G_{V'+}$ scaled by $\delta$ about $x'$.  By the lemma, we have
  \[ fp(x') = \frac{1}{\sqrt{\det(V'V'^T)}} \lim_{\delta \to 0} fP(\neighp) / \delta^d. \]
  The preimage $f^{-1}(\neighp)$ approaches an $n$-parallelepiped
  centered at $x$ with section $G^\delta_V$.  This is because the basis vectors $v_i'$ forming
  $V'$ are the directional derivatives of $f$ in the directions given
  by the basis vectors $v_i$ forming $V$.  In directions normal to $\M$, we just need the
  derivatives of $f$ to be finite.
  We may thus apply the lemma again to write
  \begin{align*}
    \lim_{\delta \to 0} fP(\neighp) / \delta^d
    & = \lim_{\delta \to 0} P(f^{-1}(\neighp)) / \delta^d \\
    & = \sqrt{\det(VV^T)} p(x),
  \end{align*}
  giving the desired result.
\end{proof}

\end{document}